\theoremstyle{definition}
\newtheorem{theorem}{Theorem}[section]
\newtheorem{corollary}{Corollary}[theorem]
\newtheorem{lemma}[theorem]{Lemma}
\newtheorem{condition}{Condition}
\theoremstyle{definition}
\newtheorem{definition}{Definition}[section]
\title{Additive Polycyclic Codes over $\mathbb{F}_{4}$ Induced by Binary Vectors and Some Optimal Codes}
\author{Arezoo Soufi Karbaski}
\address{Department of Mathematics, Bu Ali Sina University, Hamedan, Iran}
\email{arezoo.sufi@basu.ac.ir}
\author{Taher Abualrub}
\address{Department of Mathematics \& Statistics,
American University of Sharjah, Sharjah, UAE }
\email{abualrub@aus.edu}
\author{Nuh Aydin}
\address{Department of Mathematics \& Statistics, Kenyon College, Gambier, OH, USA}
\email[]{aydinn@kenyon.edu}
\author{Peihan Liu}
\address{Department of Mathematics, University of Michigan, Ann Arbor, MI, USA}
\email[]{paulliu@umich.edu}
\begin{document}


\maketitle

\begin{abstract}
In this paper we  study the structure and properties of additive right
and left polycyclic codes induced by a binary vector $a$ in $\mathbb{F}%
_{2}^{n}.$ We  find the generator polynomials and the cardinality of
these codes. We  also study different duals for these codes. In
particular, we  show that if $C$ is a right polycyclic code induced by a
vector $a\in \mathbb{F}_{2}^{n}$, then the Hermitian dual of $C$ is a
sequential code induced by $a.$ As an application of these codes, we 
present examples of additive right polycyclic codes over $\mathbb{F}_{4}$
with more codewords than comparable optimal linear codes as well as optimal
binary linear codes and optimal quantum codes obtained from additive right
polycyclic codes over $\mathbb{F}_{4}.$
\end{abstract}

\section{Preliminaries}

A linear code of length $n$ over a finite field $\mathbb{F}$ is a subspace
of $\mathbb{F}^{n}.$ An Additive code of length $n$ over a finite field $%
\mathbb{F}$ is a subgroup of $\mathbb{F}^{n}.$ Additive codes over the
finite field $\mathbb{F}_{4}=\left\{ 0,1,\alpha ,\alpha ^{2}\right\} $ where 
$\alpha ^{2}+\alpha +1=0$ were introduced in \cite{Calderbank1998} because
of their applications in quantum computing. Define the mapping $T:\mathbb{F}%
_{4}^{n}\rightarrow \mathbb{F}_{4}^{n}$ by $T\left( \left(
c_{0},c_{1},\ldots ,c_{n-1}\right) \right) =\left(
c_{n-1},c_{0},c_{1},\ldots ,c_{n-2}\right) .$ Cyclic additive codes over $%
\mathbb{F}_{4}^{n}$ are additive codes over $\mathbb{F}_{4}^{n}$ such that
if $c=\left( c_{0},c_{1},\ldots ,c_{n-1}\right) \in C,$ then $T\left(
c\right) \in C.$

Let $C$ be a linear code over a finite field $\mathbb{F}.$ Right and left
linear polycyclic codes over finite fields were introduced in \cite%
{Sergio2009}. These codes are generalization of cyclic codes over finite
fields. Their properties and structures are studied in details in (\cite%
{Sergio2009},\cite{Matsuoka2012},\cite{Adel2016},\cite{Huffman2007}).

In this paper, we are interested in studying the structure and the properties
of additive right and left polycyclic codes induced by a binary vector $a=\left( a_{0},a_{1},\ldots ,a_{n-1}\right) \in \mathbb{F}_{2}^{n}.$ In this work, we consider the important special case where the  vector $a$ has all binary entries. Useful results are obtained in this special case both in terms of the structure of the codes and obtaining codes with good parameters.
We  give the definition of these codes, study their properties and find
their generator polynomials and their cardinality. We  also
study different duals of these codes and show that if $C$ is a right
polycyclic code induced by a vector $a\in \mathbb{F}_{2}^{n}$, then the
Hermitian dual of $C$ is a sequential code induced by $a.$ As an application
of our study, we present examples of codes with good parameters. We have
three sets of examples. One is a set of additive polycyclic codes that
contain more codewords (twice as many) than comparable optimal linear codes
over $\mathbb{F}_{4}$ with the same length and the minimum distance. Another
one is a set of best known binary linear codes, most of which are also
optimal, that are obtained from additive right polycyclic codes over $%
\mathbb{F}_{4}$ via certain maps. And the third is a set of optimal quantum
codes according to the database \cite{database} obtained from
additive polycyclic codes. 

\section{Introduction}

Consider the finite filed $\mathbb{F}_{4}=\left\{ 0,1,\alpha ,\alpha
^{2}\right\} $ where $\alpha ^{2}+\alpha +1=0.$ An additive code $C$ of
length $n$ over $\mathbb{F}_{4}$ is a subgroup of $(\mathbb{F}_{4}^{n},+).$

\begin{definition}
\label{Defn Right polycyclic}Let $C$ be an additive code over $\mathbb{F}_{4}
$ and let $a=\left( a_{0},a_{1},\ldots ,a_{n-1}\right) \in \mathbb{F}_{2}^{n}$.  $C$ is called
an additive right polycyclic code induced by $a$, if for any  $c=\left( c_{0},c_{1},\ldots ,c_{n-1}\right) \in C$, we have 
\[
\left( 0,c_{0},c_{1},\ldots ,c_{n-2}\right) +c_{n-1}\left(
a_{0},a_{1},\ldots ,a_{n-1}\right) \in C.
\]
\end{definition}

\begin{definition}
\label{Defn Left polycyclic}Let $C$ be an additive code over $\mathbb{F}_{4}
$ and let $a=\left( a_{0},a_{1},\ldots ,a_{n-1}\right) \in \mathbb{F}_{2}^{n}$.  $C$ is called
an additive left polycyclic code induced by $a$, if for any  $c=\left( c_{0},c_{1},\ldots ,c_{n-1}\right) \in C$, we have 
\[
\left( c_{1},c_{2},\ldots ,c_{n-1},0\right) +c_{0}\left( a_{0},a_{1},\ldots
,a_{n-1}\right) \in C.
\]
\end{definition}

Notice that if $a=\left( 1,0,0,\ldots ,0\right) ,$ then right polycyclic
codes induced by $a$ are just the familiar cyclic codes.

Suppose that $a=\left( a_{0},a_{1},\ldots ,a_{n-1}\right) \in \mathbb{F}%
_{2}^{n}.$ As in the case of additive cyclic codes over finite fields, it is
 useful to have polynomial representations of additive (right or
left) polycyclic codes.

Let $a=\left( a_{0},a_{1},\ldots ,a_{n-1}\right) \in \mathbb{F}_{2}^{n}.$
The vector $a$ can be represented as $a\left( x\right) =a_{0}+a_{1}x+\ldots
+a_{n-1}x^{n-1}\in \mathbb{F}_{2}\left[ x\right] $. Consider the ring $R_{n}=%
\mathbb{F}_{4}\left[ x\right] /\left\langle x^{n}-a\left( x\right)
\right\rangle$, which is an $\mathbb{F}_{2}\left[ x\right] $%
-module. Let $c\left( x\right) =c_{0}+c_{1}x+\ldots +c_{n-1}x^{n-1}\in 
\mathbb{F}_{4}\left[ x\right] /\left\langle x^{n}-a\left( x\right)
\right\rangle .$ Then,%
\begin{eqnarray*}
xc\left( x\right)  &=&c_{0}x+c_{1}x^{2}+\ldots +c_{n-2}x^{n-1}+c_{n-1}x^{n}
\\
&=&c_{0}x+c_{1}x^{2}+\ldots +c_{n-2}x^{n-1}+c_{n-1}\left(
a_{0}+a_{1}x+\ldots +a_{n-1}x^{n-1}\right)  \\
&=&c_{n-1}a_{0}+x\left( c_{0}+c_{n-1}a_{1}\right) +\ldots +x^{n-1}\left(
c_{n-2}+c_{n-1}a_{n-1}\right) .
\end{eqnarray*}%
The polynomial representation of $xc\left( x\right) $ is $\left(
c_{n-1}a_{0},c_{0}+c_{n-1}a_{1},\ldots ,c_{n-2}+c_{n-1}a_{n-1}\right)
=\left( 0,c_{0},c_{1},\ldots ,c_{n-2}\right) +c_{n-1}\left(
a_{0},a_{1},\ldots ,a_{n-1}\right) .$

Similarly, let $c=\left( c_{0},c_{1},\ldots ,c_{n-1}\right) \in \mathbb{F}%
_{4}^{n}$ be represented by the polynomial $c\left( x\right)
=c_{n-1}+c_{n-2}x+\cdots +c_{0}x^{n-1}\in \mathbb{F}_{4}\left[ x\right]
/\left\langle x^{n}-a\left( x\right) \right\rangle ,$ where $a=\left(
a_{0},a_{1},\ldots ,a_{n-1}\right) \in \mathbb{F}_{2}^{n}$ and $a\left(
x\right) =a_{n-1}+a_{n-2}x+\ldots +a_{0}x^{n-1}\in \mathbb{F}_{2}\left[ x%
\right] $. Then,%
\begin{eqnarray*}
xc\left( x\right)  &=&c_{n-1}x+c_{n-2}x^{2}+\ldots +c_{1}x^{n-1}+c_{0}x^{n}
\\
&=&c_{n-1}x+c_{n-2}x^{2}+\ldots +c_{1}x^{n-1}+c_{0}\left(
a_{n-1}+a_{n-2}x+\ldots +a_{0}x^{n-1}\right)  \\
&=&c_{0}a_{n-1}+x\left( c_{n-1}+c_{0}a_{n-2}\right) +\ldots +x^{n-1}\left(
c_{1}+c_{0}a_{0}\right) .
\end{eqnarray*}%
The polynomial representation of $xc\left( x\right) $ is $\left(
c_{1}+c_{0}a_{0},\ldots ,c_{n-1}+c_{0}a_{n-2},c_{0}a_{n-1}\right) =\left(
c_{1},c_{2},\ldots ,c_{n-1},0\right) +c_{0}\left( a_{0},a_{1},\ldots
,a_{n-1}\right) .$ Hence, we get the following lemmas.

\begin{lemma}
\label{submodule-right}$C$ is an additive right polycyclic code induced by $%
a $ if and only if $C$ is an $\mathbb{F}_{2}\left[ x\right] $-submodule of $%
R_{n}.$
\end{lemma}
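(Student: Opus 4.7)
The plan is to view the lemma as a direct translation of the right polycyclic shift into the ring language, using the computation of $xc(x)$ that is already carried out just above the statement. First I would fix the $\mathbb{F}_{2}$-linear bijection $\phi : \mathbb{F}_{4}^{n} \to R_{n}$ sending $(c_{0}, c_{1}, \ldots, c_{n-1})$ to $c_{0}+c_{1}x+\cdots+c_{n-1}x^{n-1}$, and transport $C$ to $\phi(C) \subseteq R_{n}$.

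Next I would reduce the $\mathbb{F}_{2}[x]$-submodule condition to two checks: (i) $\phi(C)$ is an additive subgroup, and (ii) $x \cdot \phi(C) \subseteq \phi(C)$. This reduction uses that $\mathbb{F}_{2}[x]$ is generated as a ring by $x$ together with the scalars $0, 1 \in \mathbb{F}_{2}$, and that since $\mathbb{F}_{4}$ has characteristic $2$ every additive subgroup is automatically closed under multiplication by $\mathbb{F}_{2}$; once closure under multiplication by $x$ is established, closure under multiplication by each $x^{k}$ and hence by each polynomial in $\mathbb{F}_{2}[x]$ follows by iteration and $\mathbb{F}_{2}$-linear combination.

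For the forward direction, assume $C$ is an additive right polycyclic code induced by $a$. Clearly $\phi(C)$ is an $\mathbb{F}_{2}$-vector subspace, so it only remains to verify (ii). Given $c(x) \in \phi(C)$, the computation preceding the lemma shows that $xc(x)$ is the polynomial representation of $(0, c_{0}, c_{1}, \ldots, c_{n-2}) + c_{n-1}(a_{0}, a_{1}, \ldots, a_{n-1})$, which lies in $C$ by Definition \ref{Defn Right polycyclic}; hence $xc(x) \in \phi(C)$. Conversely, if $\phi(C)$ is an $\mathbb{F}_{2}[x]$-submodule, then for any $c \in C$ the element $xc(x) \in \phi(C)$ corresponds under $\phi^{-1}$ to exactly the right polycyclic shift of $c$, so $C$ satisfies Definition \ref{Defn Right polycyclic}.

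I do not anticipate a serious obstacle: the entire content of the lemma is the dictionary between ring multiplication by $x$ and the combinatorial shift, and that dictionary is already on the page. The only mildly subtle point worth being explicit about is the remark that additive codes over $\mathbb{F}_{4}$ are automatically $\mathbb{F}_{2}$-vector spaces due to characteristic $2$, which is what lets us upgrade a group-theoretic hypothesis to an $\mathbb{F}_{2}[x]$-module conclusion.
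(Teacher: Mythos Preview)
Your proposal is correct and follows essentially the same approach as the paper: the paper does not give a separate proof of this lemma but simply states that it follows from the computation of $xc(x)$ performed immediately before, which is exactly the dictionary you invoke. Your write-up is more explicit than the paper's (in particular, your remark that an additive subgroup of $\mathbb{F}_{4}^{n}$ is automatically an $\mathbb{F}_{2}$-vector space, and the reduction of the submodule condition to closure under multiplication by $x$, are details the paper leaves implicit), but the underlying argument is the same.
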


\begin{lemma}
\label{submodule-left}$C$ is an additive left polycyclic code induced by $a$
if and only if $C$ is an $\mathbb{F}_{2}\left[ x\right] $-submodule of $%
R_{n}.$
\end{lemma}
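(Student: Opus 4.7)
The plan is to follow exactly the template of Lemma \ref{submodule-right}, reusing the polynomial-multiplication calculation just carried out for the left-shift representation. The key observation is that under the encoding $c(x) = c_{n-1} + c_{n-2}x + \cdots + c_0 x^{n-1}$, the product $xc(x)$ in $R_n$ was shown to equal $(c_1, c_2, \ldots, c_{n-1}, 0) + c_0(a_0, a_1, \ldots, a_{n-1})$, which is exactly the left polycyclic shift induced by $a$. Thus closure under the left polycyclic shift and closure under multiplication by $x$ in $R_n$ are the same condition, and everything else is formal.

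For the forward direction, suppose $C$ is an additive left polycyclic code induced by $a$. Then $C$ is a subgroup of $(\mathbb{F}_4^n,+)$, so it is closed under addition in $R_n$. The defining left-shift property, together with the identification above, gives $x\,c(x) \in C$ for every $c(x) \in C$. Since $\mathbb{F}_2 = \{0,1\}$, closure under $\mathbb{F}_2$-scalar multiplication is automatic. Iterating multiplication by $x$ and combining with additive closure then yields $f(x)\,c(x) \in C$ for every $f(x) \in \mathbb{F}_2[x]$, so $C$ is an $\mathbb{F}_2[x]$-submodule of $R_n$.

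Conversely, if $C$ is an $\mathbb{F}_2[x]$-submodule of $R_n$, then $C$ is closed under addition (so it is a subgroup of $(\mathbb{F}_4^n,+)$, giving the additive-code structure) and closed under multiplication by $x$, which by the calculation is exactly the left polycyclic condition. Hence $C$ is an additive left polycyclic code induced by $a$. The only mildly delicate point is to note that although $R_n$ is defined as a quotient of $\mathbb{F}_4[x]$, it inherits a well-defined $\mathbb{F}_2[x]$-module structure via the inclusion $\mathbb{F}_2 \subset \mathbb{F}_4$; once this is observed, the argument is mechanical and parallels Lemma \ref{submodule-right} line for line, so I anticipate no substantive obstacle.
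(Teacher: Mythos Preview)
Your proposal is correct and follows exactly the same approach as the paper: the paper simply states the lemma as an immediate consequence of the preceding polynomial computation (``Hence, we get the following lemmas'') without writing out any further argument, and your forward/converse reasoning just makes explicit what that computation already established.
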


Let $C$ be an additive right polycyclic code induced by $\mathbf{a}.$ Then $%
C $ is invariant under right multiplication by the square matrix 
\[
D=\left[ 
\begin{array}{ccccc}
0 & 1 & 0 & 0 & \ldots \\ 
0 & 0 & 1 & 0 & \ldots \\ 
\vdots &  &  & \ddots &  \\ 
0 & 0 & \ldots & 0 & 1 \\ 
a_{0} & a_{1} & a_{2} & \ldots & a_{n-1}%
\end{array}%
\right] . 
\]%
Similarly, an additive left polycyclic code induced by $\mathbf{d}=\left(
d_{0},d_{1},\ldots ,d_{n-1}\right) $ is invariant under right multiplication
by the square matrix 
\[
E=\left[ 
\begin{array}{ccccc}
d_{0} & d_{1} & d_{2} & \ldots & d_{n-1} \\ 
1 & 0 & 0 & \ldots & 0 \\ 
0 & 1 & 0 & \ldots & 0 \\ 
\vdots &  & \ddots & 0 & 1 \\ 
0 & 0 & \ldots & 1 & 0%
\end{array}%
\right] . 
\]

\begin{theorem}
\label{right-left1}Let $C$ be an additive right polycyclic code induced by $%
\mathbf{a}=\left( a_{0},a_{1},\ldots ,a_{n-1}\right) $ with $a_{0}\neq 0$.
Then $C$ is an additive left polycyclic code induced by $\mathbf{d}=\left(
d_{0},d_{1},\ldots ,d_{n-1}\right) $ where $d_{j}=\frac{-{a_{j+1}}}{a_{0}}$
for $j<n-1$ and $d_{n-1}=\frac{1}{a_{0}}$.
\end{theorem}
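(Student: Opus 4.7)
The plan is to show that the matrices $D$ and $E$, associated respectively with the right polycyclic structure induced by $\mathbf{a}$ and the left polycyclic structure induced by $\mathbf{d}$, are inverses of one another, and then to leverage finiteness of $C$ to convert closure under $D$ into closure under $E=D^{-1}$.

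First I would verify $DE=I_n$ by a direct row-by-row computation, indexing coordinates by $0,\ldots,n-1$. For $0\le i\le n-2$, row $i$ of $D$ is the standard basis row vector $e_{i+1}$, so row $i$ of $DE$ equals row $i+1$ of $E$, which is exactly $e_i$ by the definition of $E$. Row $n-1$ of $D$ is $(a_0,a_1,\ldots,a_{n-1})$, so the last row of $DE$ is
\[
a_0(d_0,d_1,\ldots,d_{n-1})+\sum_{k=1}^{n-1}a_k\,e_{k-1}.
\]
In coordinate $j<n-1$ this equals $a_0 d_j+a_{j+1}$, which vanishes because $d_j=-a_{j+1}/a_0$; in coordinate $n-1$ it equals $a_0 d_{n-1}=1$ since $d_{n-1}=1/a_0$. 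So the last row of $DE$ is $e_{n-1}$ and $DE=I_n$. The hypothesis $a_0\neq 0$ (forcing $a_0=1$ in $\mathbb{F}_2$) is exactly what is required for $\mathbf{d}$ to be well defined, and one checks immediately that $\mathbf{d}\in\mathbb{F}_2^n$ as required by Definition \ref{Defn Left polycyclic}.

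It remains to transfer closure. Since $C$ is a right polycyclic code induced by $\mathbf{a}$, right multiplication by $D$ maps $C$ into itself; this map is additive and injective (as $D$ is invertible over $\mathbb{F}_4$), and $C$ is finite, so it is a bijection of $C$ onto itself. Thus every $c\in C$ has the form $c'D$ for some $c'\in C$, whence $cE=cD^{-1}=c'\in C$. Consequently $C$ is closed under right multiplication by $E$, which by Definition \ref{Defn Left polycyclic} together with the explicit form of $E$ is precisely the condition for $C$ to be a left polycyclic code induced by $\mathbf{d}$. The only real obstacle is the careful bookkeeping in $DE=I_n$; the rest is a short finiteness argument.
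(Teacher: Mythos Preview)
Your proof is correct and follows essentially the same approach as the paper: both arguments identify $E=D^{-1}$ and convert invariance of $C$ under $D$ into invariance under $D^{-1}$. Your version is in fact a bit more careful, since you explicitly verify $DE=I_n$ and justify the equality $CD=C$ via finiteness, whereas the paper simply asserts $CD=C$ and the form of $D^{-1}$.
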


\begin{proof}
It is obvious that the matrix $D$ above is invertible and 
\[
D^{-1}=\left[ 
\begin{array}{ccccc}
d_{0} & d_{1} & d_{2} & \ldots & d_{n-1} \\ 
1 & 0 & 0 & \ldots & 0 \\ 
0 & 1 & 0 & \ldots & 0 \\ 
\vdots &  & \ddots & 0 &  \\ 
0 & 0 & \ldots & 1 & 0%
\end{array}%
\right] , 
\]%
where $d_{j}=\frac{-{a_{j+1}}}{a_{0}}$ for $j<n-1$ and $d_{n-1}=\frac{1}{%
a_{0}}$. Since $CD=C$, then $C=CDD^{-1}=CD^{-1}$ and hence $C$ is a left
polycyclic code induced by $\mathbf{d}=\left( d_{0},d_{1},\ldots
,d_{n-1}\right) $.
\end{proof}

Note that if $C$ is both additive right and left polycyclic code induced by
the same $\mathbf{a}$, it is an additive right-left polycyclic code.

\begin{theorem}
\label{left-right-2}Let $C$ be an additive right-left polycyclic code
induced by $a\left( x\right) =a_{0}+a_{1}x+\ldots +a_{n-1}x^{n-1}.$ Then $%
a\left( x\right) =\frac{x^{n}+a_{0}}{1+a_{0}x},$ where $a_{0}^{n+1}=\left(
-1\right) ^{n+1}.$
\end{theorem}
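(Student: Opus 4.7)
The plan is to leverage Theorem~\ref{right-left1}: since $C$ is right polycyclic induced by $a$, and the stated conclusion forces $a_0\neq 0$, that theorem will give us that $C$ is also left polycyclic induced by the vector $d=(d_0,\ldots,d_{n-1})$ with $d_j=-a_{j+1}/a_0$ for $j<n-1$ and $d_{n-1}=1/a_0$. But by hypothesis $C$ is left polycyclic induced by $a$ as well, so comparing the two left polycyclic structures on the same code should force $a=d$. Justifying this uniqueness---that the inducing vector of a nontrivial left polycyclic code is determined by the code---is the main obstacle; subtracting the two left-shift relations gives $c_0(a-d)\in C$ for every $c\in C$, so one has to rule out $a\neq d$ by invoking a genericity hypothesis implicit in the definition of a right-left polycyclic code (for example, that $C$ contains a codeword with nonzero leading entry and is not $\mathbb{F}_2$-spanned by $a-d$).

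With $a=d$ in hand, the coordinate equalities become a first-order recurrence $a_{j+1}=-a_0 a_j$ for $j=0,\dots,n-2$, together with the terminal condition $a_0 a_{n-1}=1$. A straightforward induction on $j$ will give $a_j=(-1)^j a_0^{j+1}$ for every $j$, and then substituting $j=n-1$ into the terminal condition produces $(-1)^{n-1}a_0^{n+1}=1$, i.e.\ $a_0^{n+1}=(-1)^{n+1}$, which is the second assertion of the theorem.

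For the closed form of $a(x)$ I would recognize the coefficients as a geometric progression:
$$a(x)=\sum_{j=0}^{n-1}(-1)^j a_0^{j+1}x^j=a_0\sum_{j=0}^{n-1}(-a_0x)^j=\frac{a_0\bigl(1-(-a_0x)^n\bigr)}{1+a_0x}.$$
Expanding the numerator as $a_0-(-1)^n a_0^{n+1}x^n$ and invoking $a_0^{n+1}=(-1)^{n+1}$ collapses it to $a_0+x^n$, yielding $a(x)=(x^n+a_0)/(1+a_0x)$ as claimed. The two main inputs are thus Theorem~\ref{right-left1} and the uniqueness step; the remainder is a recurrence together with a geometric-series identity.
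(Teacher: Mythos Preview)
Your approach is essentially the same as the paper's: both invoke Theorem~\ref{right-left1} and then equate $a$ with the induced vector $d$. The paper carries out the algebra at the level of polynomials---it writes $d(x)=-\dfrac{a(x)-a_0}{a_0 x}+\dfrac{x^{n-1}}{a_0}$, sets this equal to $a(x)$, and solves directly for $a(x)=\dfrac{x^n+a_0}{1+a_0x}$, then deduces the condition on $a_0$ from the requirement that $1+a_0x$ divide $x^n+a_0$---whereas you work coordinate-wise via the recurrence $a_{j+1}=-a_0a_j$ and then sum a geometric series; these are two presentations of the same computation.

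On the uniqueness step you single out as the main obstacle: the paper does not address it either---it simply asserts $a(x)=d(x)$ without argument. So your version is, if anything, more scrupulous than the original on this point; but be aware that the published proof offers no additional justification you could borrow.
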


\begin{proof}
Since $C$ be an additive right-left polycyclic code induced by $a\left(
x\right) =a_{0}+a_{1}x+\ldots +a_{n-1}x^{n-1},$ then $a\left( x\right)
=d\left( x\right) ,$ where 
\begin{eqnarray*}
d\left( x\right) &=&\frac{-a_{1}}{a_{0}}+\frac{-a_{2}}{a_{0}}x+\frac{-a_{3}}{%
a_{0}}x^{2}+\ldots +\frac{-a_{n-1}}{a_{0}}x^{n-2}+\frac{1}{a_{0}}x^{n-1} \\
&=&-\frac{a_{0}+a_{1}x+\ldots +a_{n-1}x^{n-1}-a_{0}}{a_{0}x}+\frac{1}{a_{0}}%
x^{n-1} \\
&=&-\frac{a\left( x\right) -a_{0}}{a_{0}x}+\frac{1}{a_{0}}x^{n-1}.
\end{eqnarray*}%
Since $a\left( x\right) =d\left( x\right) ,$ then $a\left( x\right) =-\frac{%
a\left( x\right) -a_{0}}{a_{0}x}+\frac{1}{a_{0}}x^{n-1}=\frac{-a\left(
x\right) +a_{0}+x^{n}}{a_{0}x}.$ This implies that $a_{0}xa\left( x\right)
=-a\left( x\right) +a_{0}+x^{n}$ and $a\left( x\right) =\frac{x^{n}+a_{0}}{%
1+a_{0}x}.$ Since $a\left( x\right) =a_{0}+a_{1}x+\ldots +a_{n-1}x^{n-1},$
then the root of $1+a_{0}x$ must be also the root of $x^{n}+a_{0}.$
Therefore, $a_{0}^{n+1}=\left( -1\right) ^{n+1}.$
\end{proof}

\section{The generators of additive right and left polycyclic Codes}

In this section, we will focus on finding the structure of additive right
polycyclic codes over $\mathbb{F}_{4}$ induced by a binary vector $%
a=\left( a_{0},a_{1},\ldots ,a_{n-1}\right) \in \mathbb{F}_{2}^{n}.$ The
structure of additive left polycyclic codes over $\mathbb{F}_{4}$ is obtained in
the same way.

Suppose that $C$ is an additive right polycyclic code over $\mathbb{F}_{4}$
induced by $a=\left( a_{0},a_{1},\ldots ,a_{n-1}\right) \in \mathbb{F}%
_{2}^{n}.$ From Lemma \ref{submodule-right}, $C$ is an $\mathbb{F}_{2}\left[
x\right] $-submodule of $R_{n}.$ For the rest of the paper, we will simply write $f$ for the polynomial $f\left( x\right)$.

Let $\alpha g_{1}+g_{2},b$ be two polynomials in $C$ with the following
properties:

\begin{condition}
\label{condition 1} The polynomial $b$ is a binary polynomial of degree less
than $n$ in $C$ and of minimal degree. If $C$ has no binary polynomials,
then $b=0$
\end{condition}

\begin{condition}
\label{condition2}The polynomial $\alpha g_{1}+g_{2}$ is a nonbinary
polynomial of degree less than $n$ in $C$ in which $g_{1}$ and $g_{2}$ are
binary polynomials and $g_{1}$ is of minimal degree. If $C$ has only binary
polynomials, then $\alpha g_{1}+g_{2}=0$.
\end{condition}

Suppose that $c\in C.$ If $c$ is a binary polynomial in $C,$ then by the
division algorithm we get that%
\begin{eqnarray*}
c &=&qb+r\text{, or} \\
r &=&c-qb\text{ }
\end{eqnarray*}%
where $q$ and $r$ are binary polynomials and $\deg r<\deg c$ or $r=0.$
Hence, $r=c-qb\in C$ and $r=0.$ Therefore, $c=qb.$ Now suppose that $%
c=\alpha c_{1}+c_{2}$ is a nonbinary polynomial in $C.$ Then,%
\[
c_{1}=q_{1}g_{1}+r_{1}, 
\]%
where $q_{1}$ and $r_{1}$ are binary polynomials and $\deg r_{1}<\deg g_{1}.$
Hence,%
\begin{eqnarray*}
\alpha c_{1} &=&q_{1}\alpha g_{1}+\alpha r_{1}. \\
&=&q_{1}\left( \alpha g_{1}+g_{2}\right) +q_{1}g_{2}+\alpha r_{1}
\end{eqnarray*}%
Also,%
\[
c_{2}=q_{2}b+r_{2} 
\]%
where $q_{2},r_{2}$ are binary polynomials with $\deg r_{2}<\deg b.$ Hence,%
\begin{eqnarray*}
\alpha c_{1}+c_{2} &=&q_{1}\left( \alpha g_{1}+g_{2}\right)
+q_{1}g_{2}+\alpha r_{1}+q_{2}b+r_{2} \\
&=&q_{1}\left( \alpha g_{1}+g_{2}\right) +q_{2}b+\left( \alpha
r_{1}+q_{1}g_{2}+r_{2}\right) .
\end{eqnarray*}%
This implies that $\left( \alpha r_{1}+q_{1}g_{2}+r_{2}\right) \in C.$ Since 
$q_{1}g_{2}+r_{2}$ is a binary polynomial and $\deg r_{1}<\deg g_{1}$, 
we conclude that $r_{1}=0.$ Thus, $\left( q_{1}g_{2}+r_{2}\right) $ is a
binary polynomial in $C.$ Hence,\ $q_{1}g_{2}+r_{2}=q_{3}b$ and%
\begin{eqnarray*}
\alpha c_{1}+c_{2} &=&q_{1}\left( \alpha g_{1}+g_{2}\right) +q_{2}b+q_{3}b \\
&=&q_{1}\left( \alpha g_{1}+g_{2}\right) +\left( q_{2}+q_{3}\right) b
\end{eqnarray*}%
where $q_{1},q_{2},q_{3}$ are all binary polynomials. This proves the
following Theorem that classifies all right polycyclic codes.

\begin{theorem}
\label{main-right}Let $C$ be an additive right polycyclic code induced by $a. $ Then, $C=\left\langle \alpha g_{1}+g_{2},b\right\rangle $ where $\alpha
g_{1}+g_{2},b$ satisfy conditions \ref{condition 1} and \ref{condition2} and 
$\deg g_{2}<\deg b.$
\end{theorem}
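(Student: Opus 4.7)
The plan is to establish the equality of $\mathbb{F}_2[x]$-submodules $C = \langle \alpha g_1 + g_2, b\rangle$ in $R_n$, and then argue that the degree condition $\deg g_2 < \deg b$ can always be arranged without loss of generality.

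First, I would dispose of the inclusion $\langle \alpha g_1 + g_2, b\rangle \subseteq C$ by appealing directly to Lemma \ref{submodule-right}: $C$ is an $\mathbb{F}_2[x]$-submodule of $R_n$ containing both generators, so it contains all their $\mathbb{F}_2[x]$-linear combinations. This direction is essentially free.

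For the reverse inclusion, I would formalize the computation that already appears in the paragraph preceding the statement. Taking arbitrary $c \in C$, I would split into two cases. If $c$ is binary, the division algorithm in $\mathbb{F}_2[x]$ gives $c = qb + r$ with $\deg r < \deg b$; since $r = c - qb$ is a binary element of $C$, the minimality of $\deg b$ in Condition \ref{condition 1} forces $r = 0$. If $c = \alpha c_1 + c_2$ is nonbinary, dividing $c_1$ by $g_1$ and $c_2$ by $b$ and rearranging yields
\[
c = q_1(\alpha g_1 + g_2) + q_2 b + (\alpha r_1 + q_1 g_2 + r_2),
\]
so the remainder term $\alpha r_1 + q_1 g_2 + r_2$ lies in $C$. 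Its $\alpha$-part $r_1$ satisfies $\deg r_1 < \deg g_1$, and the minimality clause in Condition \ref{condition2} forces $r_1 = 0$. Then $q_1 g_2 + r_2$ is a binary element of $C$, hence a multiple of $b$ by the binary case, and $c \in \langle \alpha g_1 + g_2, b\rangle$ follows.

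Finally, for the degree condition, if the initially chosen $g_2$ has $\deg g_2 \geq \deg b$, I would divide $g_2 = qb + r$ with $\deg r < \deg b$ and replace $\alpha g_1 + g_2$ by $\alpha g_1 + r = (\alpha g_1 + g_2) - qb \in C$; this preserves the $\alpha$-coefficient $g_1$ (so Condition \ref{condition2} still holds) while achieving $\deg r < \deg b$. I expect the most delicate point to be justifying the vanishing of $r_1$ in the nonbinary case, since one has to verify carefully that the minimality stipulation of Condition \ref{condition2} truly applies to the combination $\alpha r_1 + q_1 g_2 + r_2$ rather than just to pure $\alpha$-multiples; once this is pinned down, the remaining steps are routine invocations of the division algorithm in $\mathbb{F}_2[x]$.
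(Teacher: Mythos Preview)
Your proposal is correct and follows essentially the same approach as the paper: the paper places the spanning argument (division by $b$ in the binary case, division of $c_1$ by $g_1$ in the nonbinary case, then reduction of the residual binary polynomial modulo $b$) in the paragraph preceding the theorem, and its formal proof handles only the degree condition $\deg g_2 < \deg b$ via the same replacement $\alpha g_1 + g_2 \mapsto \alpha g_1 + r$ that you describe. The only cosmetic difference is that the paper phrases this last step as an equality of codes $C = D = \langle \alpha g_1 + r, b\rangle$ rather than as a change of generator.
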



\begin{proof}
We only need to prove that $\deg g_{2}<\deg b.$ Suppose that $\deg g_{2}\geq
\deg b.$ Since $g_{2}$ and $b$ are binary polynomials,  by the division
algorithm we have $g_{2}=qb+r,$ where $r=0$ or $\deg r<\deg b.$ Consider the
additive right polycyclic code $D=\left\langle \alpha g_{1}+r,b\right\rangle 
$ induced by $a.$ Then $\alpha g_{1}+r=\alpha g_{1}+g_{2}-qb\in C.$ Hence, $%
D\subseteq C.$ On the other hand, we have $\alpha g_{1}+g_{2}=\alpha
g_{1}+r+qb\in D.$ Thus, $C\subseteq D$ and therefore $C=D.$
\end{proof}

\begin{lemma}
\label{uniqueness-right}Suppose that $C=\left\langle \alpha
g_{1}+g_{2},b\right\rangle $ is an additive right polycyclic code induced by 
$a,$ with the generators satisfy the conditions in Theorem \ref{main-right}.
Then $g_{1},g_{2}$ and $b$ are unique.
\end{lemma}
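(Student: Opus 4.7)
The plan is to establish uniqueness of the three components $b$, $g_1$, and $g_2$ in that order, by comparing two putative generating sets that both satisfy Conditions \ref{condition 1} and \ref{condition2} together with the degree bound $\deg g_2 < \deg b$ from Theorem \ref{main-right}. So suppose $C=\langle \alpha g_1+g_2,b\rangle =\langle \alpha g_1'+g_2',b'\rangle$ with both sets of generators satisfying all the hypotheses.

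For $b$: the division-algorithm argument preceding Theorem \ref{main-right} shows that every binary polynomial in $C$ is of the form $qb$ for some $q\in\mathbb{F}_2[x]$, and, symmetrically, of the form $q'b'$. Hence $b'=q_1 b$ and $b=q_2 b'$ for binary $q_1,q_2$. Since $b$ and $b'$ are both binary polynomials of minimal degree in $C$, these $q_i$ must be units in $\mathbb{F}_2$, forcing $b=b'$. (If $C$ contains no nonzero binary polynomial, then $b=0=b'$ by the convention built into Condition \ref{condition 1}.)

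For $g_1$: both $g_1$ and $g_1'$ realize the minimal binary degree in Condition \ref{condition2}, so $\deg g_1=\deg g_1'=:d$, and as binary polynomials each has leading coefficient $1$. The element $\alpha(g_1-g_1')+(g_2-g_2')$ lies in $C$. If $g_1\neq g_1'$, then $g_1-g_1'$ is a nonzero binary polynomial whose degree is strictly less than $d$, because the leading terms cancel in $\mathbb{F}_2$. This would give a nonbinary element of $C$ whose $\alpha$-part has degree below $d$, contradicting the minimality clause in Condition \ref{condition2}. Therefore $g_1=g_1'$.

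For $g_2$: with $g_1=g_1'$ in hand, $g_2-g_2'\in C$ is a binary polynomial, so by the first step it equals $qb$ for some $q\in\mathbb{F}_2[x]$. The bound $\deg g_2,\deg g_2'<\deg b$ forces $\deg(g_2-g_2')<\deg b$, so $q=0$ and $g_2=g_2'$. The main technical point is the cancellation of leading coefficients over $\mathbb{F}_2$ in the $g_1$ step, which is what makes minimality of $\deg g_1$ enough to pin $g_1$ down exactly rather than merely up to addition of a lower-degree binary polynomial; the ambient quotient $R_n$ does not cause trouble because $a(x)\in\mathbb{F}_2[x]$, so reduction modulo $x^n-a(x)$ preserves the splitting of an element of $R_n$ into its binary and its $\alpha$-binary part.
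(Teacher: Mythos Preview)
Your proof is correct and follows essentially the same approach as the paper's: establish uniqueness of $b$, then $g_1$, then $g_2$, each time exploiting the relevant minimality condition together with the fact that the difference of two monic binary polynomials of the same degree has strictly smaller degree. Your treatment is slightly more explicit about edge cases (the $b=0$ convention, why leading terms cancel over $\mathbb{F}_2$) and uses a divisibility/associate argument for the $b$ step where the paper uses the direct difference argument, but the underlying ideas are identical.
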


\begin{proof}
Suppose that $C=\left\langle \alpha g_{1}+g_{2},b_{1}\right\rangle
=\left\langle \alpha g_{3}+g_{4},b_{2}\right\rangle .$ Since $b_{1}$ is a
binary polynomial of minimal degree and $b_{2}$ is binary polynomials of
minimal degree in $C,$ then $\deg b_{1}=\deg b_{2}.$ If $b_{1}\neq b_{2},$
then $b_{3}=\left( b_{1}-b_{2}\right) $ is a binary polynomial in $C$ of
degree less than the degree of $b_{1}.$ Hence, $b_{1}-b_{2}=0$ and $%
b_{1}=b_{2}.$ Since $g_{1}$ and $g_{3}$ are of minimal degree in $C,$ then $%
\deg g_{1}=\deg g_{3}.$ If $g_{1}\neq g_{2},$ then $\alpha \left(
g_{1}-g_{3}\right) +\left( g_{2}-g_{4}\right) \in C,$ and $\deg \left(
g_{1}-g_{2}\right) <\deg g_{1}.$ A contradiction unless $g_{1}=g_{3}.$ If $%
g_{2}\neq g_{4},$ then $\left( \alpha g_{1}+g_{2}\right) -\left( \alpha
g_{3}+g_{4}\right) =g_{2}-g_{4}$ is a binary polynomial in $C$ of degree
less than the degree of $b.$ A contradiction. Hence $g_{2}=g_{4}$ and the
generators are unique.
\end{proof}

\begin{theorem}
\label{main-binary}Let $C$ be an additive right polycyclic code induced by a
binary vector $a.$ Then the code $C$ is given by

\begin{enumerate}
\item $C=\left\langle b\right\rangle ,$ where $b$ is a binary polynomial of
minimal degree in $C$ and $b|\left( x^{n}-a\right) .$ Or

\item $C=\left\langle \alpha g_{1}+g_{2}\right\rangle $ where $g_{1},g_{2}$
satisfy condition \ref{condition2}. Moreover, $g_{1}|\left( x^{n}-a\right) ,$
and $\left( x^{n}-a\right) |\left( \dfrac{x^{n}-a}{g_{1}}g_{2}\right) .$ Or

\item $C=\left\langle \alpha g_{1}+g_{2},b\right\rangle ,$ where $%
g_{1},g_{2} $ and $b$ are polynomials satisfy conditions \ref{condition 1}
and \ref{condition2}. Moreover, $b|\left( x^{n}-a\right) ,$ $g_{1}|\left(
x^{n}-a\right) ,$ $\deg g_{2}<\deg b$ and $b|\left( \dfrac{x^{n}-a}{g_{1}}%
g_{2}\right) $
\end{enumerate}
\end{theorem}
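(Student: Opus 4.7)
The plan is to split the argument along the three sub-cases already indicated by Theorem~\ref{main-right}: (1) $\alpha g_1+g_2=0$, so $C=\langle b\rangle$; (2) $b=0$, so $C=\langle \alpha g_1+g_2\rangle$; and (3) both generators are nonzero. The key structural fact I would lean on throughout is that $a$ being binary forces $x^n-a\in\mathbb{F}_2[x]$, so reducing a binary polynomial modulo $x^n-a$ produces another binary polynomial of degree less than $n$, and that reduction still lies in $C$ whenever the unreduced element did.

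For the divisibilities $b\mid(x^n-a)$ (Cases 1 and 3) and $g_1\mid(x^n-a)$ (Cases 2 and 3), I would run the division algorithm in $\mathbb{F}_2[x]$ and read off the outcome inside $R_n$. Writing $x^n-a=qb+r$ with $\deg r<\deg b$ gives $q\cdot b\equiv -r\pmod{x^n-a}$, so $-r\in C$ by the $\mathbb{F}_2[x]$-module structure; as a binary polynomial of degree below $\deg b$, Condition~\ref{condition 1} forces $r=0$. The analogous computation for $q_1\cdot(\alpha g_1+g_2)$, after writing $x^n-a=q_1 g_1+r_1$, produces a nonbinary element of $C$ whose $\alpha$-part is $-r_1$, and Condition~\ref{condition2} forces $r_1=0$.

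For the second-type divisibilities, set $h=(x^n-a)/g_1\in\mathbb{F}_2[x]$, available once $g_1\mid(x^n-a)$ is in hand. Then $h\cdot(\alpha g_1+g_2)=\alpha(x^n-a)+hg_2\equiv hg_2\pmod{x^n-a}$, and since $x^n-a$ is binary, the reduced representative $[hg_2]$ is a binary polynomial in $C$ of degree less than $n$. In Case 2 this reduction must equal $0$, yielding $(x^n-a)\mid\tfrac{x^n-a}{g_1}g_2$ directly. In Case 3 a short division-and-minimality argument (identical in shape to the one used for $b\mid(x^n-a)$) shows that every binary element of $C$ of degree less than $n$ is an $\mathbb{F}_2[x]$-multiple of $b$; hence $b\mid[hg_2]$, and combining with $b\mid(x^n-a)$ upgrades this to $b\mid hg_2$ in $\mathbb{F}_2[x]$, as required.

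The step I expect to need the most care is that final upgrade in Case 3: the natural argument only delivers $b\mid hg_2$ modulo $x^n-a$, and one must explicitly invoke the already-established $b\mid(x^n-a)$ to promote the congruence into a genuine polynomial divisibility in $\mathbb{F}_2[x]$. Everything else amounts to careful bookkeeping between $\mathbb{F}_2[x]$ and $R_n$ together with the two minimality conditions on $b$ and $g_1$.
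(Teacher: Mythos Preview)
Your proposal is correct and follows essentially the same route as the paper: divide $x^n-a$ by $b$ and by $g_1$ in $\mathbb{F}_2[x]$, use Conditions~\ref{condition 1} and~\ref{condition2} to kill the remainders, and then multiply $\alpha g_1+g_2$ by $(x^n-a)/g_1$ to produce a binary element of $C$. Your explicit care in Case~3 about promoting $b\mid[hg_2]$ in $R_n$ to $b\mid hg_2$ in $\mathbb{F}_2[x]$ via the already-established $b\mid(x^n-a)$ is in fact a point the paper's own proof glosses over, so your write-up is slightly more complete there.
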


\begin{proof}
From Theorem \ref{main-right}, we know that $C=\left\langle \alpha
g_{1}+g_{2},b\right\rangle ,$ where $\alpha g_{1}+g_{2},b$ satisfy
conditions \ref{condition 1} and \ref{condition2} and $\deg g_{2}<\deg b$.

\begin{enumerate}
\item If $C$ has only binary polynomials, then $\alpha g_{1}+g_{2}=0$ and $%
C=\left\langle b\right\rangle .$ Since $a$ is a binary polynomial,  by
the division algorithm we have%
\[
x^{n}-a=Q_{1}b+R_{1}. 
\]%
where $Q_{1}$ and $R_{1}$ are binary polynomials with $\deg R_{1}<\deg b$ or 
$R_{1}=0.$ Hence, $R_{1}=Q_{1}b\in C.$ Therefore, $R_{1}=0$ and $b|\left(
x^{n}-a\right) .$

\item Suppose that $C$ has no binary polynomials. Then $b=0$ and $%
C=\left\langle \alpha g_{1}+g_{2}\right\rangle $ where $g_{1},g_{2}$ satisfy
condition \ref{condition2}. We also have%
\[
x^{n}-a=Q_{2}g_{1}+R_{2}, 
\]%
where $Q_{2},R_{2}$ are binary polynomials and $\deg R_{2}<\deg g_{1}$.
Hence, 
\[
\alpha \left( x^{n}-a\right) =Q_{2}\left( \alpha g_{1}+g_{2}\right) +\alpha
R_{2}+Q_{2}g_{2}. 
\]%
This implies that $\left( \alpha R_{2}+Q_{2}g_{2}\right) \in C.$ Since $%
Q_{2}g_{2}$ is a binary polynomial and $\deg R_{2}<\deg g_{1},$ t $R_{2}=0 $ and $g_{1}|\left( x^{n}-a\right) .$ Notice that $\dfrac{x^{n}-a}{%
g_{1}}\left( \alpha g_{1}+g_{2}\right) =\dfrac{x^{n}-a}{g_{1}}g_{2}$ is a
binary polynomial in $C.$ Since $C$ has no binary polynomials, $\left(
x^{n}-a\right) |\left( \dfrac{x^{n}-a}{g_{1}}g_{2}\right) .$

\item If $C$ has binary and nonbinary polynomials, then $C=\left\langle
\alpha g_{1}+g_{2},b\right\rangle ,$ where $g_{1},g_{2}$ and $b$ are
polynomials satisfy conditions \ref{condition 1}, \ref{condition2} and $\deg
g_{2}<\deg b.$ From (1) and (2) we get that $b|\left( x^{n}-a\right) ,$ $%
g_{1}|\left( x^{n}-a\right) .$ Since $\dfrac{x^{n}-a}{g_{1}}\left( \alpha
g_{1}+g_{2}\right) =\dfrac{x^{n}-a}{g_{1}}g_{2}$ is a binary polynomial in $%
C,$ then $b|\left( \dfrac{x^{n}-a}{g_{1}}g_{2}\right) .$
\end{enumerate}
\end{proof}

\begin{corollary}
If $a\left( x\right) $ is a binary polynomial and $\gcd \left( b,\dfrac{%
x^{n}-a}{g_{1}}\right) =1,$ then $g_{2}=0.$
\end{corollary}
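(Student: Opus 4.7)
The plan is to read off the divisibility conclusions of Theorem \ref{main-binary}~(3) and combine them with a standard coprimality argument in $\mathbb{F}_{2}[x]$. Since the corollary presumes a nontrivial binary part $b$ together with a nonbinary generator $\alpha g_{1}+g_{2}$, I would apply case~(3) of Theorem \ref{main-binary} to obtain simultaneously the three facts $b\mid (x^{n}-a)$, $g_{1}\mid (x^{n}-a)$, and most importantly
\[
b\;\Bigm|\;\frac{x^{n}-a}{g_{1}}\,g_{2},
\]
together with the degree bound $\deg g_{2}<\deg b$.

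The next step is to invoke the hypothesis $\gcd\!\bigl(b,(x^{n}-a)/g_{1}\bigr)=1$. Since all polynomials involved lie in $\mathbb{F}_{2}[x]$, which is a principal ideal domain, coprimality of $b$ with $(x^{n}-a)/g_{1}$ combined with the divisibility displayed above forces $b\mid g_{2}$ by Euclid's lemma. The argument closes by comparing degrees: $b\mid g_{2}$ forces either $g_{2}=0$ or $\deg g_{2}\geq \deg b$, and the latter contradicts $\deg g_{2}<\deg b$ from Theorem \ref{main-right}. Hence $g_{2}=0$, as required.

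There is no real obstacle here; the corollary is essentially a one-line consequence of Theorem \ref{main-binary}~(3) once the coprimality hypothesis is in place. The only subtlety worth flagging is ensuring that the Euclid-lemma step is legitimate, i.e.\ that both $b$ and $(x^{n}-a)/g_{1}$ are genuine elements of $\mathbb{F}_{2}[x]$ (not of $\mathbb{F}_{4}[x]$), which is guaranteed by the binary hypothesis on $a$ and by Condition \ref{condition 1}.
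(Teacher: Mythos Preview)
Your argument is correct and follows the same route as the paper's proof: invoke the divisibility $b\mid \dfrac{x^{n}-a}{g_{1}}g_{2}$ from Theorem~\ref{main-binary}(3), apply coprimality to deduce $b\mid g_{2}$, and use the degree bound $\deg g_{2}<\deg b$ to force $g_{2}=0$. The only difference is that you spell out the PID/Euclid-lemma justification and the binarity check more explicitly than the paper does.
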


\begin{proof}
Since $b|\left( \dfrac{x^{n}-a}{g_{1}}g_{2}\right) $ and $\gcd \left( b,%
\dfrac{x^{n}-a}{g_{1}}\right) =1,$ then $b|g_{2}.$ But $\deg g_{2}<\deg b.$
Hence, $g_{2}=0.$
\end{proof}

\begin{corollary}
\label{main-left} Let $C$ be an additive left polycyclic code induced by a
binary vector $a.$ Then the code $C$ is given by

\label{Construction}
\begin{enumerate}
\item $C=\left\langle b\right\rangle ,$ where $b$ is a binary polynomial of
minimal degree in $C$ and $b|\left( x^{n}-a\right) .$ Or

\item $C=\left\langle \alpha g_{1}+g_{2}\right\rangle $ where $g_{1},g_{2}$
satisfy condition \ref{condition2}. Moreover, $g_{1}|\left( x^{n}-a\right) ,$
and $\left( x^{n}-a\right) |\left( \dfrac{x^{n}-a}{g_{1}}g_{2}\right) .$ Or

\item $C=\left\langle \alpha g_{1}+g_{2},b\right\rangle ,$ where $g_{1},g_{2}
$ and $b$ are polynomials satisfy conditions \ref{condition 1} and \ref%
{condition2}. Moreover, $b|\left( x^{n}-a\right) ,$ $g_{1}|\left(
x^{n}-a\right) ,$ $\deg g_{2}<\deg b$ and $b|\left( \dfrac{x^{n}-a}{g_{1}}%
g_{2}\right) $
\end{enumerate}
\end{corollary}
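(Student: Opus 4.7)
The plan is to transcribe the proof of Theorem \ref{main-binary} essentially verbatim, observing that every argument in that theorem -- and in the preceding results Theorem \ref{main-right} and Lemma \ref{uniqueness-right} -- uses only the fact that $C$ is an $\mathbb{F}_2[x]$-submodule of $R_n$. Since Lemma \ref{submodule-left} gives precisely this submodule property for additive left polycyclic codes induced by $a$, no additional structural input is required, and I would simply invoke that lemma at the outset to move from the left polycyclic hypothesis into the submodule category where all the earlier machinery lives.

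More concretely, I would first select $b$ and $\alpha g_1 + g_2$ in $C$ using Conditions \ref{condition 1} and \ref{condition2}; these conditions are formulated purely in terms of binary versus nonbinary polynomials and minimality of degree, and are intrinsic to $C$ as a subset of $R_n$. Next I would run the division-algorithm reduction exactly as in the derivation leading up to Theorem \ref{main-right}: any binary element of $C$ reduces to zero modulo $b$, and any nonbinary element $\alpha c_1 + c_2$ reduces, using only multiplication by binary polynomials (i.e.\ the submodule property), to an $\mathbb{F}_2[x]$-combination of $b$ and $\alpha g_1 + g_2$. This yields $C = \langle \alpha g_1 + g_2, b\rangle$ with $\deg g_2 < \deg b$, and the uniqueness statement transfers directly from Lemma \ref{uniqueness-right}. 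Finally, the three divisibility conditions $b \mid (x^n - a)$, $g_1 \mid (x^n - a)$, and $b \mid \left(\tfrac{x^n - a}{g_1}\, g_2\right)$ come from dividing $x^n - a$ by $b$ and by $g_1$ in $\mathbb{F}_2[x]$ and using $x^n - a \equiv 0$ in $R_n$; this step is again purely an $R_n$-computation, independent of whether the submodule structure arises from a left or right shift.

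The main obstacle -- really more of a bookkeeping point than a genuine difficulty -- is keeping track of the fact that in the left polycyclic setup the ring $R_n$ is built from the reversed polynomial $a(x) = a_{n-1} + a_{n-2} x + \cdots + a_0 x^{n-1}$, and the polynomial representation of a codeword is likewise reversed. Consequently, all divisibility statements in the corollary must be interpreted with respect to this reversed polynomial modulus. Once that convention is fixed, every inequality $\deg r < \deg g_1$, every application of the division algorithm, and every reduction modulo $x^n - a$ transfers directly from the right polycyclic case, and one obtains exactly the three generator shapes listed in the corollary.
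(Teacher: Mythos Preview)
Your proposal is correct and follows exactly the paper's approach: the paper's proof simply states that the argument is similar to Theorem \ref{main-right}, Lemma \ref{uniqueness-right}, and Theorem \ref{main-binary}, and you have made explicit the reason this transfer works, namely that Lemma \ref{submodule-left} supplies the $\mathbb{F}_2[x]$-submodule property on which all those arguments rest. Your remark about the reversed polynomial convention in $R_n$ is a useful clarification that the paper leaves implicit.
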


\begin{proof}
The proof is similar to the proof in Theorem \ref{main-right}, Lemma \ref%
{uniqueness-right} and Theorem \ref{main-binary}.
\end{proof}

\begin{theorem}
\label{Cardinality-binary}Let $C$ be an additive right polycyclic code
induced by a binary vector $a.$

\begin{enumerate}
\item If $C=\left\langle b\right\rangle ,$ where $b$ is a binary polynomial
with $\deg b=t_{1},$ then $\left\vert C\right\vert =2^{n-t_{1}}.$

\item If $C=\left\langle \alpha g_{1}+g_{2}\right\rangle ,$ where $\deg
g_{1}=t_{2}$ and $\left( x^{n}-a\right) |\left( \dfrac{x^{n}-a}{g_{1}}%
g_{2}\right) ,$ then $\left\vert C\right\vert =2^{n-\deg t_{2}}.$

\item If $C=\left\langle \alpha g_{1}+g_{2},b\right\rangle ,$ where $\deg
b=t_{1}$ and $\deg g_{1}=t_{2},$ then $\left\vert C\right\vert
=2^{n-t_{1}}2^{n-t_{2}}$
\end{enumerate}
\end{theorem}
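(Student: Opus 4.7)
The plan is to exploit the direct-sum decomposition $R_n = S \oplus \alpha S$, where $S := \mathbb{F}_{2}[x]/\langle x^{n}-a(x)\rangle$. This decomposition is available because $a(x)$ is a binary polynomial, so the defining relation is preserved on each $\mathbb{F}_2$-summand of $\mathbb{F}_{4}$. Every element of $R_n$ admits a unique expression $v + \alpha u$ with $u,v \in S$, and the $\alpha$-component map $\pi: R_n \to S$ defined by $\pi(v + \alpha u) = u$ is an $\mathbb{F}_{2}[x]$-module homomorphism. In each of the three cases I would restrict $\pi$ to $C$ and use the short exact sequence $0 \to \ker(\pi|_{C}) \to C \to \pi(C) \to 0$ of $\mathbb{F}_{2}$-vector spaces, so that $|C| = |\ker(\pi|_{C})| \cdot |\pi(C)|$.

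For case (1), $C = \langle b \rangle$ contains only binary polynomials, so $\pi(C) = 0$ and $C = \ker(\pi|_{C})$. Since $b \mid (x^{n}-a)$ with quotient $h := (x^{n}-a)/b$ of degree $n - t_{1}$, the annihilator of $b$ in $S$ is the principal ideal $(h)$; thus $C \cong \mathbb{F}_{2}[x]/(h)$ has $2^{n-t_{1}}$ elements. For case (2), $\pi(C)$ is the $\mathbb{F}_{2}[x]$-submodule of $S$ generated by $g_{1}$, which by the same annihilator computation has $2^{n-t_{2}}$ elements. The hypothesis $(x^{n}-a) \mid \frac{x^{n}-a}{g_{1}} g_{2}$ forces $h_{1}(\alpha g_{1} + g_{2}) \equiv 0$ in $R_{n}$, where $h_{1} = (x^{n}-a)/g_{1}$; one then verifies that $q(\alpha g_{1}+g_{2}) \equiv 0$ in $R_{n}$ if and only if $h_{1} \mid q$, because the $\alpha$-part $q g_{1} \equiv 0$ forces $h_{1}\mid q$ and the binary part $qg_{2}$ then vanishes automatically. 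Hence $\pi|_{C}$ is injective and $|C| = 2^{n-t_{2}}$.

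For case (3), the image $\pi(C) = \langle g_{1}\rangle \subseteq S$ contributes the factor $2^{n-t_{2}}$ as in case (2). The crucial step is to show that $\ker(\pi|_{C})$, the set of binary polynomials in $C$, equals the $\mathbb{F}_{2}[x]$-submodule $\langle b\rangle$. Given a binary $c \in C$, write $c = q_{1}(\alpha g_{1}+g_{2}) + q_{2} b$ with $q_{1},q_{2} \in \mathbb{F}_{2}[x]$; the vanishing of the $\alpha$-component forces $q_{1} g_{1} \equiv 0 \pmod{x^{n}-a}$, hence $h_{1} \mid q_{1}$. Writing $q_{1} = k h_{1}$ and substituting, $c \equiv k h_{1} g_{2} + q_{2} b \pmod{x^{n}-a}$, which lies in $\langle b\rangle$ by the divisibility $b \mid \frac{x^{n}-a}{g_{1}} g_{2}$ from Theorem \ref{main-binary}. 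Applying case (1) to $\ker(\pi|_{C}) = \langle b\rangle$ gives the factor $2^{n-t_{1}}$, and multiplicativity yields $|C| = 2^{n-t_{1}} \cdot 2^{n-t_{2}}$.

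The principal obstacle is the identification in case (3) of the binary polynomials of $C$ with the single-generator module $\langle b\rangle$; this is precisely where the divisibility condition on $b$ from the structure theorem is used in an essential way. Once that reduction is in hand, the multiplicative count of $|C|$ is a direct consequence of the short exact sequence, and cases (1) and (2) appear as degenerate sub-cases of the same framework.
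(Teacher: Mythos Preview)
Your proof is correct and takes a genuinely different route from the paper's. The paper argues case~(3) directly by the division algorithm: given $c=f_{1}(\alpha g_{1}+g_{2})+f_{2}b$, it reduces $f_{1}$ modulo $(x^{n}-a)/g_{1}$ and $f_{2}$ modulo $(x^{n}-a)/b$, then absorbs the leftover binary term $q_{1}\frac{x^{n}-a}{g_{1}}g_{2}$ into the $b$-part, arriving at a representative $c=r_{1}(\alpha g_{1}+g_{2})+(r_{2}+r_{4})b$ with $\deg r_{1}<n-t_{2}$ and $\deg(r_{2}+r_{4})<n-t_{1}$. This yields the bound $|C|\le 2^{n-t_{1}}2^{n-t_{2}}$; uniqueness of the representation (hence equality) is left implicit. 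Your approach instead uses the $\mathbb{F}_{2}[x]$-module splitting $R_{n}=S\oplus\alpha S$ available because $a$ is binary, together with the short exact sequence $0\to\ker(\pi|_{C})\to C\to\pi(C)\to 0$. The identification $\ker(\pi|_{C})=\langle b\rangle$ via the divisibility $b\mid\frac{x^{n}-a}{g_{1}}g_{2}$ is exactly the structural content hidden in the paper's absorption step, and the exact sequence then gives the cardinality as a product automatically, with no separate uniqueness check needed. Your argument is more conceptual and makes the role of the divisibility hypotheses from Theorem~\ref{main-binary} transparent; the paper's argument is more hands-on and closer to exhibiting an explicit $\mathbb{F}_{2}$-basis of $C$.
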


\begin{proof}
We will prove (3). The proof of (1) and (2) is similar to the proof of (3).

Suppose that $C=\left\langle \alpha g_{1}+g_{2},b\right\rangle $ where $%
g_{1},g_{2},$ and $b$ satisfy the conditions in Theorem \ref{main-binary}.
Let $c\left( x\right) \in C.$ Then $c=f_{1}\left( \alpha g_{1}+g_{2}\right)
+f_{2}b.$ Using the division algorithm, we get 
\[
f_{2}=q_{2}\frac{x^{n}-a}{b}+r_{2}, 
\]%
where $r_{2}$ is a binary polynomial satisfying $\deg r_{2}<n-t_{1}.$ Hence, 
\[
f_{2}b=q_{2}\frac{x^{n}-a}{b}b+r_{2}b=r_{2}b, 
\]%
where $\deg r_{2}<n-t_{1}.$ Again using the division algorithm, we get 
\[
f_{1}=q_{1}\frac{x^{n}-a}{g_{1}}+r_{1}, 
\]%
where $r_{1}$ is a binary polynomial satisfying $\deg r_{1}<n-t_{2}.$ Hence, 
\begin{eqnarray*}
f_{1}\left( \alpha g_{1}+g_{2}\right) &=&q_{1}\frac{x^{n}-a}{g_{1}}\left(
\alpha g_{1}+g_{2}\right) +r_{1}\left( \alpha g_{1}+g_{2}\right) \\
&=&r_{1}\left( \alpha g_{1}+g_{2}\right) +q_{1}\frac{x^{n}-a}{g_{1}}g_{2}.
\end{eqnarray*}%
Since $q_{1}\dfrac{x^{n}-a}{g_{1}}g_{2}$ is a binary polynomial, $q_{1}%
\dfrac{x^{n}-a}{g_{1}}g_{2}=r_{3}b.$ We have 
\[
r_{3}=q_{3}\frac{x^{n}-a}{b}+r_{4}, 
\]%
where $\deg r_{4}<n-t_{1}.$ Hence, 
\begin{eqnarray*}
r_{3}b &=&q_{3}\frac{x^{n}-a}{b}b+r_{4}b \\
&=&r_{4}b.
\end{eqnarray*}%
Therefore, 
\begin{eqnarray*}
c &=&f_{1}\left( \alpha g_{1}+g_{2}\right) +f_{2}b \\
&=&r_{1}\left( \alpha g_{1}+g_{2}\right) +\left( r_{2}+r_{4}\right) b,
\end{eqnarray*}%
where $\deg r_{1}<n-t_{2}$ and $\deg \left( r_{2}+r_{3}\right) <n-t_{1}.$
Hence,\ $\left\vert C\right\vert =2^{n-t_{1}}2^{n-t_{2}}.$
\end{proof}

\section{Duals and Additive Sequential Codes}

\begin{definition}
Let $C$ be an additive code over $\mathbb{F}_{4}$ and let $c=\left(
c_{0},c_{1},\ldots ,c_{n-1}\right) \in C.$ $C$ is called an additive right
sequential code induced by $a$ if there exists a vector $a=\left(
a_{0},a_{1},\ldots ,a_{n-1}\right) \in \mathbb{F}_{2}^{n}$ such that%
\[
\left( c_{1},c_{2},\ldots ,c_{n-1},c_{0}a_{0}+c_{1}a_{1}+\ldots
+c_{n-1}a_{n-1}\right) \in C.
\]
\end{definition}

The trace map $Tr_{2}:\mathbb{F}_{4}\rightarrow \mathbb{F}_{2}$ is defined
by $Tr_{2}\left( a\right) =a+a^{2}.$ Thus, $Tr_{2}\left( 0\right)
=Tr_{2}\left( 1\right) =0,$ and $Tr_{2}\left( \alpha \right) =Tr_{2}\left(
\alpha ^{2}\right) =1.$ Moreover, For any $\mathbf{x},\mathbf{y}\in \mathbb{F%
}_{4}^{n},$ the trace inner product $\left\langle .,.\right\rangle _{T}$ on $%
\mathbb{F}_{4}$ is defined by%
\[
\left\langle \mathbf{x},\mathbf{y}\right\rangle _{T}=Tr_{2}\left(
\left\langle \mathbf{x},\mathbf{y}\right\rangle \right) ,
\]%
where $\left\langle \mathbf{x},\mathbf{y}\right\rangle $ is the Hermitian
inner product.

\begin{definition}
Let $C$ be an additive polycyclic code over $\mathbb{F}_{4}.$ We consider
three different dual codes for $C$

\begin{enumerate}
\item Define the Hermitian dual of $C$ by%
\[
C^{\perp }=\left\{ \mathbf{y}\in \mathbb{F}_{4}^{n}:\left\langle \mathbf{x},%
\mathbf{y}\right\rangle _{T}=0~\forall ~\mathbf{x\in C}\right\} . 
\]

\item Define the annihilator dual of $C$ by $Ann\left( C\right) =\left\{
g\in R_{n}:g\left( x\right) f\left( x\right) =0~\forall f\left( x\right) \in
C\right\} .$

\item Define the 0-dual of $C$ by%
\[
C^{0}=\left\{ g\in R_{n}:gf\left( 0\right) =0~\forall f\left( x\right) \in
C\right\} . 
\]
\end{enumerate}
\end{definition}

Since the set $R_{n}$ is an $\mathbb{F}_{4}\left[ x\right] $-submodule and $%
C $ is a subset of $R_{n},$  the sets $Ann\left( C\right) $ and $C^{0}$
above are well-defined. In [2], the authors proved that if $C$ is a linear
polycyclic code, then $Ann\left( C\right) =C^{0}.$

\begin{lemma}
\label{Dual1}Let $C$ be an additive right polycyclic code over $\mathbb{F}%
_{4}.$ Then $Ann\left( C\right) $ and $C^{0}$ are linear right polycyclic
codes over $\mathbb{F}_{4}.$
\end{lemma}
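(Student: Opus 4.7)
The plan is to use the module-theoretic characterization of linear right polycyclic codes: such a code is precisely an $\mathbb{F}_{4}[x]$-submodule of $R_{n}$ (the $\mathbb{F}_{4}[x]$ analogue of Lemma \ref{submodule-right}, whose proof is identical to the one already given). So for each of $Ann(C)$ and $C^{0}$ I would verify three closure properties: under addition, under $\mathbb{F}_{4}$ scalar multiplication, and under multiplication by $x$ in $R_{n}$.

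For $Ann(C)$, all three verifications are immediate from the $\mathbb{F}_{4}$-bilinearity and associativity of the ring product in $R_{n}$: for $g, h \in Ann(C)$, $\lambda \in \mathbb{F}_{4}$, and any $f \in C$, we have $(g+h)f = gf + hf = 0$, $(\lambda g)f = \lambda (gf) = 0$, and $(xg)f = x(gf) = 0$. Hence $Ann(C)$ is an $\mathbb{F}_{4}[x]$-submodule of $R_{n}$, i.e.\ a linear right polycyclic code induced by $a$.

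For $C^{0}$, additive and $\mathbb{F}_{4}$-scalar closure are analogous, since the constant-term map $R_{n} \to \mathbb{F}_{4}$ sending $h \mapsto h(0)$ is $\mathbb{F}_{4}$-linear, so $((\lambda g + \mu h) f)(0) = \lambda (gf)(0) + \mu (hf)(0) = 0$. The subtle step is closure under multiplication by $x$: the cleanest route is to move the $x$ from $g$'s side onto $f$'s side, $(xg)f = g(xf)$. By Lemma \ref{submodule-right}, $C$ being an additive right polycyclic code means it is an $\mathbb{F}_{2}[x]$-submodule of $R_{n}$, so $xf \in C$; then the hypothesis $g \in C^{0}$ applied to $xf \in C$ yields $(g \cdot xf)(0) = 0$, that is $((xg)f)(0) = 0$, so $xg \in C^{0}$.

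The main obstacle is the $x$-closure step for $C^{0}$: a brute-force computation of the constant term of $x(gf)$ inside $R_{n}$ runs into the coefficient $(gf)_{n-1}\, a_{0}$, which has no a priori reason to vanish. The trick is to avoid computing in $R_{n}$ directly and instead transfer the $x$-action to $f$ via associativity, where the polycyclic structure of $C$ itself (Lemma \ref{submodule-right}) does the work. With that observation in hand, the rest of the proof is routine bookkeeping.
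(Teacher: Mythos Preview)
Your argument is correct, and for $Ann(C)$ it coincides with the paper's: both show $Ann(C)$ is an ideal (equivalently, an $\mathbb{F}_4[x]$-submodule) of $R_n$ via closure under addition and ring multiplication.

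For $C^{0}$ your treatment is actually more careful than the paper's. The paper simply asserts that ``the same proof applies to $C^{0}$ as well,'' but the direct transcription of its $Ann(C)$ step---namely $(r g)f = r(gf)$ and then using $gf = 0$---does not go through for $C^{0}$, since knowing $(gf)(0)=0$ does not by itself force $(r\cdot gf)(0)=0$ (your $a_0(gf)_{n-1}$ observation is exactly the obstruction). Your rerouting $(xg)f = g(xf)$, together with $xf\in C$ from Lemma~\ref{submodule-right}, is precisely what is needed to make the $C^{0}$ case work. Combined with your $\mathbb{F}_4$-scalar closure, this yields closure under all of $\mathbb{F}_4[x]$, hence the ideal property. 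So the two proofs share the same overall strategy, but yours supplies the missing detail that the paper elides.
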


\begin{proof}
We will prove that $Ann\left( C\right) $ is an ideal in $R_{n}.$ The same
proof  applies to $C^{0}$ as well. Suppose that $s_{1},s_{2}\in Ann\left(
C\right) ,~f\in C$ and $r\in R_{n}.$ Then, $\left( s_{1}+s_{2}\right)
f=s_{1}f+s_{2}f=0$ and $r\left( x\right) s_{1}\left( x\right) f\left(
x\right) =r(x)\left( s_{1}\left( x\right) f\left( x\right) \right) =0.$
Hence, $s_{1}+s_{2}$ and $rs_{1}\in Ann\left( C\right) .$ Therefore $Ann(C)$
and $C^{0}$ are ideals in $R_{n}$ and hence they are linear right polycyclic
codes over $\mathbb{F}_{4}.$
\end{proof}

\begin{theorem}
\label{Dual2}$C$ is an additive right polycyclic code induced by $a$ if and
only if the dual code $C^{\perp }$ is an additive $\left( n,2^{2n-k}\right) $
sequential code induced by $a$.
\end{theorem}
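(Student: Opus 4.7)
The plan is to reduce the theorem to a single adjoint-style identity relating the right polycyclic shift
\[
\sigma(c) = (0,c_0,\ldots,c_{n-2}) + c_{n-1}(a_0,\ldots,a_{n-1})
\]
and the right sequential shift
\[
\tau(y) = (y_1,y_2,\ldots,y_{n-1},\,a_0 y_0 + a_1 y_1 + \cdots + a_{n-1} y_{n-1}).
\]
Specifically, I will show that for all $c,y\in\mathbb{F}_4^n$,
\[
\langle \sigma(c), y\rangle_T \;=\; \langle c, \tau(y)\rangle_T.
\]
Once this identity is in hand, the biconditional of the theorem follows immediately from the nondegeneracy of the trace-Hermitian form, and the cardinality assertion is a consequence of the standard fact that $|C|\cdot|C^{\perp}| = 4^n$ for any additive code with respect to this form.

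To establish the identity, I would expand both sides using $\langle x,y\rangle = \sum_i x_i y_i^2$ (writing $\overline{y_i} = y_i^2$) and then apply $Tr_2$. The decisive input is that $a\in\mathbb{F}_2^n$, so each $a_i$ satisfies $a_i^2 = a_i$; in particular, the conjugate of $\sum_i a_i y_i$ is $\sum_i a_i \overline{y_i}$. A direct expansion of $\langle \sigma(c),y\rangle$ collects a term $c_{n-1}a_0 \overline{y_0}$ from the first coordinate of $\sigma(c)$ with the remaining terms $(c_{i-1}+c_{n-1}a_i)\overline{y_i}$ for $i\ge 1$; and an expansion of $\langle c,\tau(y)\rangle$ produces $c_i\overline{y_{i+1}}$ for $i\le n-2$ together with $c_{n-1}\sum_j a_j\overline{y_j}$. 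Both collapse to
\[
\sum_{i=1}^{n-1} c_{i-1}\,\overline{y_i} \;+\; c_{n-1}\sum_{i=0}^{n-1} a_i\,\overline{y_i},
\]
and then applying $Tr_2$ gives the claimed identity. I expect this bookkeeping with conjugations and index shifts to be the most error-prone step; the binary hypothesis on $a$ is precisely what makes both sides land on the same polynomial.

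With the adjoint identity available, the rest is short. If $C$ is right polycyclic induced by $a$, then for any $y\in C^{\perp}$ and any $c\in C$ we have $\sigma(c)\in C$, so $\langle c,\tau(y)\rangle_T = \langle \sigma(c), y\rangle_T = 0$; hence $\tau(y)\in C^{\perp}$, i.e.\ $C^{\perp}$ is right sequential induced by $a$. Conversely, if $C^{\perp}$ is right sequential induced by $a$, then for any $c\in C$ and $y\in C^{\perp}$, the relation $\tau(y)\in C^{\perp}$ gives $\langle \sigma(c), y\rangle_T = \langle c,\tau(y)\rangle_T = 0$, so $\sigma(c) \in (C^{\perp})^{\perp} = C$, using that $\langle\cdot,\cdot\rangle_T$ is nondegenerate on $\mathbb{F}_4^n$. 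Finally, writing $|C| = 2^k$, the standard identity $|C|\cdot|C^{\perp}| = 4^n$ yields $|C^{\perp}| = 2^{2n-k}$, confirming the parameters.
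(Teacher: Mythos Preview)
Your proof is correct and rests on the same underlying mechanism as the paper's: the sequential shift $\tau$ is the adjoint of the polycyclic shift $\sigma$ with respect to the trace-Hermitian form, which holds precisely because the matrix $D$ implementing $\sigma$ has binary entries (so $\overline{D}=D$ and thus $D^{T}$ implements $\tau$). The paper packages this as the matrix computation $CD\ast H^{T}=C\ast(HD^{T})^{T}$ using a check matrix $H$, whereas you verify the identity $\langle\sigma(c),y\rangle_{T}=\langle c,\tau(y)\rangle_{T}$ coordinate-wise; these are the same statement in different notation.

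Your write-up is in fact more complete than the paper's: the paper's proof only argues the forward implication and does not revisit the cardinality claim, while you supply the converse via $(C^{\perp})^{\perp}=C$ and derive $|C^{\perp}|=2^{2n-k}$ from the nondegeneracy of the form. Both additions are routine but worth making explicit.
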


\begin{proof}
Suppose that $C$ is an additive right polycyclic code induced by $a.$ Then $%
C $ is invariant under right multiplication by the square matrix 
\[
G=\left[ 
\begin{array}{ccccc}
0 & 1 & 0 & 0 & \ldots \\ 
0 & 0 & 1 & 0 & \ldots \\ 
\vdots &  &  & \ddots &  \\ 
0 & 0 & \ldots & 0 & 1 \\ 
a_{0} & a_{1} & a_{2} & \ldots & a_{n-1}%
\end{array}%
\right] . 
\]%
If $H$ is a check matrix for $C,$ then $\overline{C}H^{T}+C\overline{H}%
^{T}=C\ast H^{T}=0$, where $\ast $ means the trace inner product. Hence, $%
CD\ast H^{T}=\overline{CD}H^{T}+CD\overline{H}^{T}=C\left( H\overline{D}%
^{T}\right) ^{T}+C\left( \overline{H}D^{T}\right) ^{T}=C\ast \left(
HD^{T}\right) ^{T}=0$. This implies that $C^{\perp }$ is invariant under
right multiplication by $D^{T}.$ Thus $C^{\perp }$ is an additive sequential
code induced by $a$.
\end{proof}

\section{Applications and Examples}

\subsection{Additive Codes with More Codewords than  Optimal Linear Codes over GF(4)}
Based on Theorem \ref{main-binary} and Theorem \ref{Cardinality-binary} 
, we conducted computer searches to find additive codes with good parameters. We took binary $g_1|x^n-a$ and binary $b|x^n-a$ that are relatively prime, and binary $g_2$ to be a random polynomial of degree less than the degree of $b.$ Then, it automatically follows that $b|\frac{x^n-a}{g_1}g_2$. As a result of our search we found 13 additive polycyclic codes over $\mathbb{F}_{4}$ that contain more codewords than the comparable optimal linear codes $\mathbb{F}_{4}$. 

Table 1 below shows these codes with their generators  $\langle \alpha g_1 +g_2,b\rangle$. We use the notation $[n,\frac{2k+1}{2},d]_4$ to denote the parameters of a polycyclic code which means its size is $2\cdot 4^k$. On the other hand, the comparable optimal linear code with the same length and minimum distance $d$ has $4^k$ codewords and the best  linear code of dimension $k+1$ has minimum distance $d-1$. In other words, these additive polycyclic codes contain twice as many codewords as the optimal linear codes with the same length and minimum distance. Note that the multinomials below refer to $x^n-a$.

\begin{table}[h]
\caption{Additive codes $[n,k,d]_4$ v.s. BKLC $[n,k,d]_4$ with smaller dimension}
\label{tab-1}
\resizebox{.89\textwidth}{!}{%

\begin{tabular}{p{1.8cm} p{1.8cm} p{2.5cm} p{5cm} p{5cm} }
\hline\noalign{\smallskip}
$[n,\frac{2k+1}{2},d]_4$ &  $[n,k,d]_4$  &  $[n,k+1,d-1]_4$ & $\langle \alpha g_1 +g_2,b\rangle$ & Multinomial   \\
\hline\noalign{\smallskip}
$[ 7, 9/2, 3]_4$ & $[ 7, 4, 3 ]_4$ & $[ 7, 5, 2 ]_4$ &
$\langle\alpha (x^2 + x + 1)+x,x^3 + x^2 + 1\rangle$ & $x^7 + x^6 + x^5 + x^3 + 1$\\

$[7, 7/2, 4]_4$ & $[7, 3, 4]_4$ & $[7, 4, 3]_4$ &
$ \langle\alpha (x + 1)+ x^4 + x^3 + x^2 + x,    x^6 + x^5 + x^4 + x^3 + x^2 + x + 1\rangle$ & $x^7 + 1$\\

$[22, 37/2, 3]_4$ & $[22, 18, 3]_4$ & $[22, 19, 2]_4$ &
$ \langle\alpha (x + 1)+ x^3 + x, x^6 + x^4 + x^3 + x + 1\rangle$ & $x^{22} + x^{19} + x^{15} + x^{14} + x^{13} + x^8 + x^7 + x^6 + x^4 + x^2 + x + 1$\\

$[23, 39/2, 3]_4$ & $[23, 19, 3]_4$ & $[23, 20, 2]_4$ &
$ \langle\alpha (x + 1)+ x^4 + x^2,  x^6 + x^5 + 1\rangle$ & $x^{23} + x^{22} + x^{21} + x^{15} + x^{13} + x^{11} + x^{10} + x^9 + x^8 + x^7 + x^3 + x^2+ x + 1$\\

$[24, 41/2 , 3]_4$ & $[24, 20, 3]_4$ & $[24,21, 2]_4$ &
$ \langle\alpha (x^2 + x + 1)+ x^2 + x, x^5 + x^4 + x^3 + x + 1\rangle$ & $ x^{24} + x^{21} + x^{20} + x^{19} + x^{18} + x^{16} + x^{14} + x^8 + x^5 + x^4 + x^3 + x^2 1,$\\

$[25, 43/2, 4]_4$ & $[25, 21, 3]_4$ & $[25, 22, 2]_4$ &
$ \langle\alpha (x + 1)+x^4 + x^2, x^6 + x^5 + 1\rangle$ & $x^{25} + x^{24} + x^{22} + x^{21} + x^{19} + x^{18} + x^{15} + x^{13} + x^{12} + x^{11} + x^{10} +x^9 + x^8 + x^7 + x^6 + x^5 + x^4 + x^3 + x^2 + 1$\\

$[26, 45/2, 3]_4$ & $[26, 22, 3]_4$ & $[26, 23, 2]_4$ &
$ \langle\alpha (x + 1)+x^4 + x^3 + x,x^6 + x^5 + x^3 + x^2 + 1\rangle$ & $x^{26} + x^{24} + x^{21} + x^{16} + x^{15} + x^{13} + x^{12} + x^{11} + x^{10} + x^9 + x^7 +x^3 + x^2 + 1$\\

$[27, 47/2, 3]_4$ & $[27, 23, 3]_4$ & $[27, 24, 2]_4$ &
$ \langle\alpha (x^2 + x + 1)+ x^2 + x, x^5 + x^4 + x^3 + x + 1\rangle$ & $x^{27} + x^{26} + x^{23} + x^{21} + x^{19} + x^{18} + x^{17} + x^{14} + x^{13} + x^{11} + x^{10} +x^7 + x^3 + x + 1$\\

$[27, 47/2, 3]_4$ & $[27, 23, 3]_4$ & $[27, 24, 2]_4$ &
$ \langle\alpha (x^2 + x + 1)+ x^2 + x, x^5 + x^4 + x^3 + x + 1\rangle$ & $x^{27} + x^{26} + x^{23} + x^{21} + x^{19} + x^{18} + x^{17} + x^{14} + x^{13} + x^{11} + x^{10} +x^7 + x^3 + x + 1$\\

$[28, 49/2, 3,]_4$ & $[28, 24, 3]_4$ & $[28, 25, 2]_4$ &
$ \langle\alpha (x^2 + x + 1)+ x^3 + x^2 + x, x^5 + x^3 + x^2 + x + 1\rangle$ & $x^{28} + x^{27} + x^{24} + x^{23} + x^{22} + x^{20} + x^{17} + x^{16} + x^{15} + x^{14} + x^{13} +x^{11} + x^{10} + x^8 + x^7 + x^5 + x^3 + x + 1$\\

$[29, 51/2, 3]_4$ & $[29, 25, 3]_4$ & $[29, 26, 2]_4$ &
$ \langle\alpha (x + 1)+ x^2 + x + 1, x^6 + x^5 + x^4 + x + 1\rangle$ & $x^{29} + x^{25} + x^{22} + x^{21} + x^{19} + x^{15} + x^{14} + x^{13} + x^{11} + x^{10} + x^6 +x^3 + x + 1$\\

$[30, 53/2, 3]_4$ & $[30, 26, 3]_4$ & $[30, 27, 2]_4$ &
$ \langle\alpha (x + 1)+ x^3 + x^2 + 1, x^6 + x^5 + x^3 + x^2 + 1\rangle$ & $x^{30} + x^{27} + x^{26} + x^{25} + x^{24} + x^{20} + x^{19} + x^{18} + x^{16} + x^{11} + x^{10} +x^9 + x^8 + x^7 + x^2 + 1$\\

$[31, 55/2, 3]_4$ & $[31, 27, 3]_4$ & $[31, 28, 2]_4$ &
$ \langle\alpha ( x + 1)+ x^4 + x + 1, x^6 + x^5 + x^3 + x^2 + 1\rangle$ & $x^{31} + x^{25} + x^{23} + x^{22} + x^{20} + x^{18} + x^{17} + x^{16} + x^{15} + x^{14} + x^{13} +x^{12} + x^{10} + x^9 + x^8 + x^7 + x^6 + x^4 + x^3 +1$\\

\noalign{\smallskip}\hline
\end{tabular}%
}
\end{table}

\subsection{Best Known Linear Codes over GF(2) Obtained from Additive Codes over GF(4)}


We can define the following maps, $W$, $T$, and $L$: \cite{Abualrub2020} 
\begin{align*}
    W \quad &: \quad \mathbb{F}_{4}^{n} \quad \xrightarrow[]{} \quad \mathbb{F}_{2}^{2n}\\
    & W(x_1,x_2,...,x_n)\\
    = &W \big( (a_1+b_1\alpha),(a_2+b_2\alpha),...,(a_n+b_n\alpha)\big)\\
    = &\big( (a_1+b_1),(a_2+b_2),...,(a_n+b_n),b_1,b_2,...,b_n \big)\\
    \mbox{(trace map)}\quad T \quad &: \quad \mathbb{F}_{4}^{n} \quad \xrightarrow[]{} \quad \mathbb{F}_{2}^{n}\\
    & T(x_1,x_2,...,x_n)\\
    = & \big( (x_1+x_1^2),(x_2+x_2^2),...,(x_n+x_n^2)\big)\\
    L \quad &: \quad \mathbb{F}_{4}^{n} \quad \xrightarrow[]{} \quad \mathbb{F}_{2}^{n}\\
    & L(x_1,x_2,...,x_n)\\
    = & \big( (x_1\alpha+x_1^2\alpha^2),(x_2\alpha+x_2^2\alpha^2),...,(x_n\alpha+x_n^2\alpha^2)\big)
\end{align*}

Using these map, we can construct binary linear codes from quaternary additive codes. Namely, by $W$, we can construct an $[2n,2k,d']_2$ linear code from an $[n,k,d]_4$ additive code; and by $T$ and $L$, we can construct $[n,k',d']_2$ linear codes from $[n,k,d]_4$ additive codes. We present some optimal codes in Table 2 where  $[n,k,d]_2$ is the binary optimal linear codes we obtained, and  $\langle \alpha g_1 +g_2,b\rangle$ refers to the generators  as in Theorem \ref{main-binary}
, multinomial denotes $x^n-a$, where $a$ is the associated vector; $*$ and $\circ$ delineates reversible and self-orthogonal codes respectively.

\begin{table}[h]
\caption{Optimal binary  linear codes $[n,k,d]_2$ obtained from quaternary additive codes based on $W$, $T$, and $L$}
\label{tab-2}
\resizebox{.8\textwidth}{!}{%

\begin{tabular}{p{1.8cm} p{1cm} p{5cm} p{5cm} }
\hline\noalign{\smallskip}
$[n,k,d]_2$  & Map & $\langle \alpha g_1 +g_2,b\rangle$ & Multinomial   \\
\hline\noalign{\smallskip}
$[ 7, 2, 4]_2^{*\circ}$ & L &
$\langle\alpha (x^5 + x^4 + x + 1) + 1,x^2 + x + 1
\rangle$ & $x^7 + x^4 + x^3 + 1$\\

$[ 10, 4, 4]_2$ & L &
$\langle\alpha (x + 1)+x + 1,x^6 + x^5 + x^4 + x + 1\rangle$ & $x^{10} + x^9 + x^8 + x^6 + x + 1$\\

$[ 12, 5, 4]_2$ & L &
$\langle\alpha (x^2 + x + 1)+x^2 + x + 1,x^7 + x^3 + x^2 + x + 1\rangle$ & $x^{12} + x^6 + x^5 + x^4 + 1$\\

$[ 16, 9, 4]_2^*$ & T &
$\langle\alpha (x^7 + x^3 + x^2 + x + 1)+x^4 + x^3 + x^2 + x + 1,x^9 + x + 1
\rangle$ & $x^{16} + x^{12} + x^{11} + x^{10} + x^9 + x^8 + x^7 + x^4 + 1$\\

$[ 17, 9, 5]_2$ & T &
$\langle\alpha (x^8 + x^5 + x^4 + x^3 + 1)+x^7 + x^6 + x^4 + x^3 + x,x^9 + x^8 + x^6 + x^5 + x^4 + x^3 + x^2 + x + 1\rangle$ & $x^{17} + x^{16} + x^{13} + x^{12} + x^{11} + x^{10} + x^9 + x^8 + x^7 + x^4 + x^2 + x +  1$\\

$[ 20, 11, 5]_2$ & W &
$\langle\alpha (x+1)+x^6 + x^4 + x^2 + x,x^8 + x^6 + x^5 + x^4 + x^2 + x + 1\rangle$ & $x^{10} + x^7 + x^5 + x^3 + x + 1$\\

$[ 26, 17, 4]_2$ & W &
$\langle\alpha (x+1)+x^4 + x^3 + x^2,x^8 + x^5 + x^3 + x + 1\rangle$ & $x^{13} + x^{10} + x^6 + x^3 + x + 1$\\

$[ 35, 24, 5]_2^*$ & T &
$\langle\alpha (x^{11} + x^9 + x^8 + x^6 + x^4 + x^3 + x^2 + x + 1)+x^{22} + x^{16} + x^{15} + x^{14} + x^{13} + x^{10} + x^9 + x^8 + x^7 + x^5 + x + 1 , x^{24} + x^{21} + x^{19} + x^{16} + x^{13} + x^{12} + x^{10} + x^8 + x^7 + x^6 + x^2 + x +1\rangle$ & $x^{35} + x^{33} + x^{30} + x^{29} + x^{26} + x^{23} + x^{21} + x^{20} + x^{19} + x^{14} + x^{13} +x^{12} + x^{11} + x^9 + x^8 + x^7 + x^6 + x^4 + x^3 + x^2 + 1
$\\

$[ 49, 39, 4]_2^*$ & T &
$\langle\alpha ( x^{10} + x^8 + x^6 + x^4 + x^2 + x + 1)+x , x^5 + x^4 + x^3 + x^2 + 1
\rangle$ & $x^{49} + x^{46} + x^{45} + x^{44} + x^{40} + x^{39} + x^{38} + x^{36} + x^{35} + x^{31} + x^{30} +
x^{29} + x^{28} + x^{27} + x^{24} + x^{23} + x^{20} + x^{19} + x^{17} + x^{15} + x^{12} + x^{11} + x^9 + x^8 + x^4 + x + 1
$\\

$[ 62, 51, 4]_2$ & W &
$\langle\alpha ( x + 1)+x^8 + x^4 + x + 1 , x^10 + x^9 + x^8 + x^4 + 1
\rangle$ & $x^{31} + x^{29} + x^{28} + x^{27} + x^{26} + x^{25} + x^{23} + x^{15} + x^{14} + x^{13} + x^{12} +x^9 + x^8 + x^7 + x^3 + x^2 + x + 1
$\\

$[ 98, 86, 4]_2$ & W &
$\langle\alpha (x^2+ x + 1)+x^8 + x^7 + x^6 + x^3 + 1 , x^{10} + x^8 + x^6 + x^4 + x^2 + x + 1\rangle$ & $x^{49} + x^{48} + x^{46} + x^{45} + x^{43} + x^{39} + x^{36} + x^{31} + x^{27} + x^{25} + x^{24} +x^{22} + x^{21} + x^{20} + x^{19} + x^{18} + x^{17} + x^{14} + x^7 + x^6 + x^3 + x +  1$\\

\noalign{\smallskip}\hline
\end{tabular}%
}
\end{table}

\subsection{Quantum Codes with Good Parameters from Additive Polycyclic Codes}
Quantum error correcting codes are an increasingly important part of quantum information theory. Many researchers have used classical block codes to construct quantum codes. One of the most commonly used  methods is the CSS construction \cite{Calderbank1998}. The CSS construction requires two linear codes $C_1$ and $C_2$ such that $C_2^{\perp}\subseteq C_1$. Hence, if $C_1$ is a self-dual code, then we can construct a CSS quantum code using $C_1$ alone since $C_1^{\perp}\subseteq C_1$. If  $C_1$ is self-orthogonal or dual-containing, then we can construct a CSS quantum code with $C_1^{\perp}$ and $C_1$ with the similar reason. 

In this section, we  present some optimal quantum codes derived from additive polycyclic codes. After constructing additive polycyclic codes by theorem 12, we used maps W, T, and L to construct self-dua, self-orthogonal, or dual-containing binary codes. Then, we used binary codes and their duals to construct CSS codes. Here are some example of optimal codes we found using this method. The optimality of these codes can be confirmed via the online table \cite{database}.

\begin{table}[H]
\caption{Optimal quantum codes $[[n,k,d]]_4$ from self-dual/self-orthogonal/dual-containing binary linear codes obtained from quaternary additive codes}

\label{tab-3}
\resizebox{.8\textwidth}{!}{%

\begin{tabular}{p{1.8cm} p{1cm} p{6cm} p{5cm} }
\hline\noalign{\smallskip}
$[[n,k,d]]_4$  & Map & $\langle \alpha g_1 +g_2,b\rangle$ & Multinomial   \\
\hline\noalign{\smallskip}

$[[ 7,1,3]]_4$ & T &
$\langle\alpha (x^3+x^2 + 1)+ x,x^4 + x^3 + x^2 + x + 1
\rangle$ & $x^7 + x^4 + x^3 + x + 1 $\\

$[[ 10, 8, 2]]_4$ & T &
$\langle\alpha (x + 1)+ 1,x^2 + x + 1\rangle$ & $x^{10} + x^9 + x^7 + x^5 + x^4 + x^2 + x + 1 $\\

$[[ 15, 7, 3]]_4$ & T &
$\langle\alpha (x^4 + x + 1)+ x^2 + x + 1,x^4 + x^3 + 1
\rangle$ & $x^{15} + x^{10} + x^9 + x^7 + x^6 + x^5 + x^3 + x^2 + 1 $\\

$[[ 30, 26, 2]]_4$ & T &
$\langle\alpha (x^2 + x + 1)+ x^8 + x^6 + 1,x^{12} + x^{11} + x^8 + x^7 + x^5 + x^4 + x^2 + x + 1
\rangle$ & $x^{30} + x^{29} + x^{27} + x^{25} + x^{24} + x^{23} + x^{22} + x^{21} + x^{20} + x^{17} + x^{13} +
x^{11} + x^{10} + x^9 + x^6 + x^5 + x^2 + x + 1 $\\

$[[ 35, 29, 2]]_4$ & L &
$\langle\alpha ( x + 1)+ x + 1,x^3 + x + 1
\rangle$ & $x^{35} + x^{34} + x^{33} + x^{31} + x^{27} + x^{25} + x^{24} + x^{22} + x^{20} + x^{18} + x^{17} +
x^{16} + x^{15} + x^{11} + x^5 + x^4 + x + 1$\\

$[[ 36, 30, 2]]_4$ & W &
$\langle\alpha ( x + 1)+  1,x^2 + x + 1
\rangle$ & $x^{18} + x^{16} + x^{13} + x^{11} + x^9 + x^8 + x^7 + x^4 + x^3 + 1 $\\

$[[ 48, 42, 2]]_4$ & W &
$\langle\alpha ( x + 1)+  1,x^2 + x + 1
\rangle$ & $x^{24} + x^{22} + x^{16} + x^{15} + x^{14} + x^{13} + x^{11} + x^{10} + x^7 + x^6 + x^4 + 1 $\\

$[[ 56, 50, 2]]_4$ & L &
$\langle\alpha ( x + 1)+  1,x^3 + x + 1
\rangle$ & $x^{56} + x^{51} + x^{49} + x^{48} + x^{47} + x^{41} + x^{38} + x^{36} + x^{31} + x^{30} + x^{29} +
x^{28} + x^{26} + x^{18} + x^{16} + x^{15} + x^{14} + x^{12} + x^{10} + x^9 + x^5 + x^2 + x + 1 $\\

$[[ 72, 66, 2]]_4$ & W &
$\langle\alpha ( x + 1)+  1,x^2 + x + 1
\rangle$ & $x^{36} + x^{35} + x^{32} + x^{31} + x^{30} + x^{29} + x^{26} + x^{22} + x^{21} + x^{20} + x^{18} +x^{16} + x^{13} + x^{10} + x^9 + x^2 + x + 1 $\\

$[[ 84, 76, 2]]_4$ & W &
$\langle\alpha ( x + 1)+  1,x^3 + x + 1
\rangle$ & $x^{42} + x^{41} + x^{40} + x^{39} + x^{37} + x^{35} + x^{33} + x^{30} + x^{29} + x^{28} + x^{27} +x^{25} + x^{21} + x^{15} + x^{14} + x^{13} + x^{11} + x^{10} + x^9 + x^6 + x^5 + x^2 +
        x + 1 $\\
 
\noalign{\smallskip}\hline
\end{tabular}%
}
\end{table}
\section{Conclusion and Future Works}

In this paper, we have studied the structure and properties of additive
right and left polycyclic codes induced by a binary vector $a\in \mathbb{F}%
_{2}^{n}.$ We have also studied different duals of these codes and showed
that $C$ is an additive right polycyclic code induced by $a$ if and only if
the Hermitian dual code $C^{\perp }$ is an additive $\left(
n,2^{2n-k}\right) $ sequential code induced by $a$. Finally, we have shown that it is possible to obtain both classical and quantum codes with good
parameters from these
codes.

Our work in this paper is on additive right and left
polycyclic codes induced by a binary vector $a\in \mathbb{F}_{2}^{n}.$ A
generalization of this work to non-binary vectors $a\in \mathbb{F}_{4}^{n}$
would be interesting. It is known  that (\cite{Calderbank1998}) constacyclic
additive codes that are not cyclic over $\mathbb{F}_{4}$ are just linear
constacyclic codes. Hence, it will be interesting to study the relationship
between additive right and left polycyclic codes over $\mathbb{F}_{4}$
induced by a nonbinary vector $a\in \mathbb{F}_{4}^{n}$ and linear
polycyclic codes over $\mathbb{F}_{4}$ induced by the vector $a.$ We have
few results on this relationship but the work is still in progress.

\end{document}